\documentclass[11pt]{article}
\usepackage[T1]{fontenc}
\usepackage[utf8]{inputenc}
\usepackage[a4paper,margin=25mm]{geometry}
\usepackage{amsmath,amssymb,amsthm}
\usepackage{graphicx}
\usepackage{booktabs}
\usepackage{xcolor}
\usepackage[numbers,sort&compress]{natbib}
\usepackage[colorlinks=true,linkcolor=blue!50!black,citecolor=blue!50!black,urlcolor=blue!50!black]{hyperref}
\usepackage{cleveref}
\graphicspath{{figures/}}
\newtheorem{proposition}{Proposition}
\newtheorem{corollary}{Corollary}
\theoremstyle{definition}

\theoremstyle{remark}
\newtheorem{remark}{Remark}
\newcommand{\KC}{K_{\mathrm{C}}}
\newcommand{\KD}{K_{\mathrm{D}}}

\usepackage{authblk}

\title{Beyond the Clique: Comparing Clique and Dowker Complexes\\ for Co-occurrence Data in Learning Analytics}
\author[1]{Koichi Yasutake\thanks{Corresponding author, Email: \texttt{ystake@hiroshima-u.ac.jp}}}
\author[2]{Wakana Tsuji\thanks{Email: \texttt{wakana.gen@mail.utoronto.ca}}}
\author[3]{Hitoshi Inoue\thanks{Email: \texttt{jinoue@nakamura-u.ac.jp}}}
\affil[1]{Graduate School of Humanities and Social Sciences, Hiroshima University, Japan}
\affil[2]{Ontario Institute for Studies in Education, University of Toronto, Canada}
\affil[3]{Faculty of Business, Marketing and Distribution, Nakamura Gakuen University, Japan}
\date{}

\begin{document}
\maketitle

\begin{abstract}
Learning analytics increasingly represents the relations in co-occurrence data---codes in a window of discourse, participants in a thread, tags on a post---as simplicial complexes and analyses them with persistent homology.
The standard approach uses the \emph{clique complex}. Because its simplices are determined by the pairwise network alone, it cannot distinguish ``three elements co-occurred together'' from ``each of the three pairs co-occurred separately''.
The \emph{Dowker complex}, in contrast, takes as simplices the sets of entities actually observed to co-occur. We compare the two.
First, we show that, on the same 1-skeleton, the Dowker complex is a subcomplex of the clique complex, the map on first homology induced by the inclusion is surjective, and its kernel is generated by \emph{phantom} triangles that never co-occurred; that is, the clique construction can only erase holes. We also give a criterion for agreement that can be checked on triples of groups.
We then measure the loss on real data and find that it takes two forms: a \emph{permanent} loss, in which phantom triangles erase holes for good, and a \emph{transient} loss, in which holes are filled before the corresponding triples have co-occurred. On four Stack Exchange data sets, 46--91\% of clique triangles are phantom and 522--1{,}970 holes are erased, and against a degree-preserving null model the two constructions lead to different conclusions.
On the example data of the learning-analytics R packages \texttt{tna}/\texttt{Nestimate}, no triangle is phantom, yet the clique complex fills every hole before the Dowker complex does, so its first Betti number $\beta_1$ is 0 at every threshold while the Dowker complex detects holes. Which form dominates depends on the data, and neither is visible at the fixed thresholds used in practice; we therefore recommend reporting the number of classes the clique complex fills earlier together with the phantom rate and the kernel dimension.
We conclude that the construction should follow the data type and that, for observed groups, the Dowker complex is the appropriate choice.
Code is available at \url{https://github.com/igu-lab/beyond-the-clique}.

\end{abstract}

\section{Introduction}\label{sec:intro}

Network methods have become a standard lens in learning analytics.
Transition Network Analysis (TNA) models sequences of learning actions as weighted directed networks~\citep{saqr2025tna,tikka2025tna}, and its companion package \texttt{Nestimate} adds higher-order tools, including higher-order Markov models, path-anomaly detection, hypergraphs, simplicial complexes and persistent homology~\citep{saqr2026nestimate,nestimate085}.
Heterogeneous Interaction Network Analysis (HINA) models learners and their coded behaviours as bipartite networks~\citep{feng2026hina}.

We have previously argued that network analyses in the Learning Sciences, grounded in classical graph theory, can represent only pairwise relations, and proposed temporal hypergraphs as a framework for analysing many-body interactions in learning as they unfold over time~\citep{kaz2025}.
Extending the period segmentation of \citet{cristea2025} (halves and quarters), we also proposed deriving data-driven period boundaries from change points in the connected components ($\beta_{0}$) of zigzag persistent homology~\citep{inoue2026}.
That work, however, tracked only the temporal evolution of connected components and did not address the structure of many-body co-occurrence.

Note that ``higher-order'' structure in data has two distinct meanings~\citep{lambiotte2019networks,battiston2020networks}.
\emph{Higher-order dependencies} (HOD) concern memory in sequences: whether the next state depends on more than the current one~\citep{xu2016representing}.
\emph{Higher-order interactions} (HOI) concern many-body co-occurrence: whether three or more entities act together in a way that pairs cannot represent.
\texttt{Nestimate} covers HOD extensively. This paper is about HOI.

The question of this paper is simple: can an analysis tell ``A, B and C talked together once'' from ``A and B, B and C, and A and C each talked once'' (\Cref{fig:core})?
Both produce the same pairwise network.
The clique complex, which fills a triangle whenever its three edges are present, also cannot tell them apart; the distinction is well known in network science~\citep{giusti2016two,patania2017shape,bick2023what}, but it has not reached the simplicial pipelines used in learning analytics.
The Dowker complex of the observed groups~\citep{dowker1952,chowdhury2018dowker} can: it fills the triangle in the first case and leaves a hole in the second.
A hypergraph can also distinguish the two, but the Dowker complex is a simplicial complex built directly from the observed groups (hyperedges), so homology and persistent homology apply to it as they stand, and it can be compared with the clique complex on equal terms.
In this sense, the Dowker complex carries the hypergraph perspective we proposed earlier~\citep{kaz2025} over to topological analysis.

\begin{figure}[t]
  \centering
  \includegraphics[width=\linewidth]{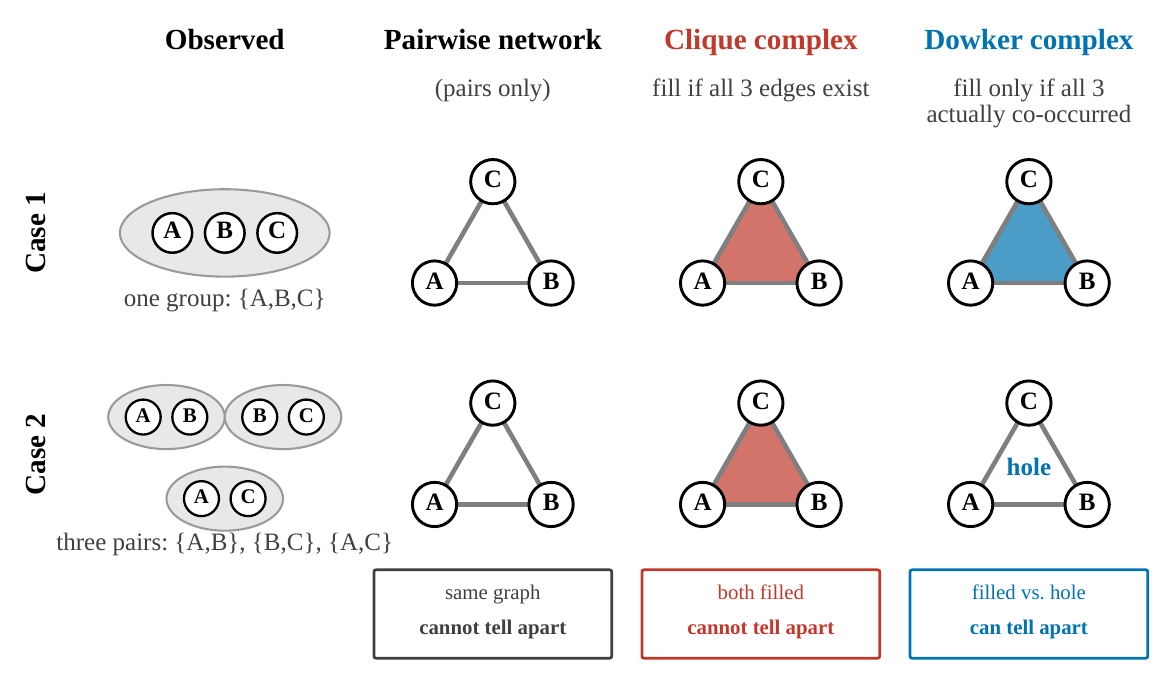}
  \caption{The core distinction. The pairwise network and the clique complex are identical for a single three-person group (Case~1) and for three separate pairs (Case~2). The Dowker complex fills the triangle only when the three actually co-occurred, leaving a hole in Case~2.}
  \label{fig:core}
\end{figure}

Current simplicial pipelines in learning analytics build clique complexes from pairwise matrices whatever the data type. For example, \texttt{build\_simplicial} and \texttt{persistent\_homology} in \texttt{Nestimate}, the companion package of TNA, fill every clique of a thresholded weight matrix.
The package can already represent observed groups as a hypergraph (\texttt{bipartite\_groups}).
Taking the position that the representation should follow the system being studied~\citep{torres2021why}, we examine what the choice of construction entails.

We make three contributions.
\begin{enumerate}
  \item \textbf{A construction rule.} We state which complex matches which data type (point clouds, observed groups, transitions, time-stamped groups) and show that, for observed groups, the clique complex is the wrong construction (\Cref{sec:rules}).
  \item \textbf{A measure of the loss.} We make explicit that the clique complex can only erase first-homology classes of the Dowker complex, identify the erased classes with \emph{phantom} triangles, and distinguish two forms of loss: a permanent loss, measured by the phantom rate and the kernel dimension, and a transient loss, measured by the number of classes the clique complex fills earlier. Unlike the metric setting, where the clique (Dowker--Rips) complex is a provably good approximation~\citep{huber2025flagifying}, count filtrations of co-occurrence data admit no such guarantee. We further give a local criterion for when the two constructions agree (\Cref{sec:theory}).
  \item \textbf{Quantification on real data and real tools.} On four Stack Exchange data sets and on the two data sets distributed with \texttt{tna}/\texttt{Nestimate}, reproducing the tools' output exactly, we measure what is lost and compare both constructions with a degree-preserving null model (\Cref{sec:results}).
\end{enumerate}
\Cref{app:tools} relates these results to the current state of the TNA toolchain, including a window co-occurrence complex added to \texttt{hypernets} on 1 October 2026, and to transition flows.

\section{Background}\label{sec:background}

This section fixes the notation used throughout: observed groups and their co-occurrence counts, the two complexes we compare (clique and Dowker), and their filtrations.

\paragraph{Observed groups.}
Let $V$ be a finite set of entities (codes, participants, tags) and let $\mathcal{H}=(h_1,\dots,h_m)$ be the observed groups, each $h_i\subseteq V$ (a window of coded discourse, a thread, a post).
Equivalently, $\mathcal{H}$ is the bipartite relation $R\subseteq V\times\{1,\dots,m\}$ with $(v,i)\in R$ whenever $v\in h_i$ (used in \Cref{sec:theory}).
For $\sigma\subseteq V$ let $w(\sigma)=|\{i:\sigma\subseteq h_i\}|$ be its co-occurrence count.

\paragraph{Clique complex.}
Given a graph $G$, its clique (flag) complex contains every vertex set that is pairwise adjacent in $G$.
It is determined by $G$ alone.
The Vietoris--Rips complex of a point cloud is the clique complex of its distance-threshold graph and is the standard construction for metric data~\citep{carlsson2009topology}.

\paragraph{Dowker complex.}
The Dowker complex of $\mathcal{H}$ contains $\sigma$ if and only if $\sigma\subseteq h_i$ for some $i$, i.e.\ $w(\sigma)\ge 1$~\citep{dowker1952}.
Its simplices are exactly the sets of entities that were observed together (and their subsets).

\paragraph{Count filtration.}
For $t\ge 1$ define
\begin{align}
  \KD(t) &= \{\sigma\subseteq V : w(\sigma)\ge t\}, \label{eq:KD}\\
  G(t)   &= \text{the graph on } V \text{ with edges } \{u,v\} \text{ such that } w(\{u,v\})\ge t, \\
  \KC(t) &= \text{the clique complex of } G(t). \label{eq:KC}
\end{align}
Since $w$ is non-increasing under inclusion, $\KD(t)$ is a simplicial complex, and decreasing $t$ gives nested filtrations of both complexes; strong co-occurrences enter first.
$\KC(t)$ is the construction used by \texttt{Nestimate::build\_simplicial(type = "clique", threshold = t)} applied to a co-occurrence matrix (\Cref{sec:rules}).
We compute persistent homology~\citep{edelsbrunner2002topological,zomorodian2005computing} over $\mathbb{F}_2$.

\section{Related work}\label{sec:related}

We review three lines of work --- complexes built from observed groups in network science, clique versus Dowker constructions in topology, and topological and higher-order methods in learning analytics --- and position this paper among them.

\paragraph{Complexes from observed groups in network science.}
Building simplicial complexes directly from observed groups, rather than from cliques of the pairwise network, is standard in parts of network science: papers as simplices of their authors~\citep{patania2017shape}, concepts co-occurring in mathematical texts~\citep{salnikov2018cooccurrence}, and simplicial closure in time-stamped group data~\citep{benson2018simplicial}.
Reviews discuss when a clique complex adds interactions that were never observed~\citep{torres2021why,bick2023what}.

\paragraph{Clique and Dowker constructions in topology.}
On the topological side, \citet{huber2025flagifying} study the flagification of Dowker complexes (the Dowker--Rips complex) as a cheaper approximation in the metric setting.
We contrast this with count filtrations of co-occurrence data in \Cref{sec:theory}.

\paragraph{Topological and higher-order methods in learning analytics.}
In learning analytics, topological data analysis has mainly been introduced through Vietoris--Rips complexes of point clouds and Mapper~\citep{munch2017user}; the simplicial tools of \texttt{Nestimate}~\citep{saqr2026nestimate} use clique complexes of pairwise matrices.
A window co-occurrence complex~\citep{salnikov2018cooccurrence} has just been added to the development version of \texttt{hypernets}~\citep{hypernets060}; it coincides with the Dowker complex only at the lowest threshold and is not compared with the clique construction (\Cref{app:hypernets}).
To our knowledge, the loss incurred by the clique construction on co-occurrence data has not previously been characterized or quantified.

\section{What the clique complex erases}\label{sec:theory}

Throughout, the vertex set of $G(t)$ is $V_t=\{v: w(\{v\})\ge t\}$, so that $G(t)$ is exactly the 1-skeleton of $\KD(t)$ and $\KC(t)$ is the flag complex of that 1-skeleton.
In the terminology of \citet{huber2025flagifying}, $\KC(t)$ is the \emph{Dowker--Rips complex}, the flagification of the Dowker complex.
The following facts are elementary; we state them because they turn the difference between the two constructions into quantities that can be measured on data.

\begin{proposition}[The clique complex can only erase first homology]\label{prop:erase}
For every $t\ge1$:
\begin{enumerate}
  \item $\KD(t)$ and $\KC(t)$ have the same 1-skeleton $G(t)$, and $\KD(t)\subseteq\KC(t)$.
  \item The inclusion induces a surjection $\iota_*:H_1(\KD(t))\to H_1(\KC(t))$ (over any field); in particular $\beta_1^{\mathrm{C}}(t)\le\beta_1^{\mathrm{D}}(t)$.
  \item $\ker\iota_*$ is spanned by the classes $[\partial\tau]$ of the \emph{phantom triangles} $\tau\in\KC(t)\setminus\KD(t)$, i.e.\ triangles whose three edges co-occur at least $t$ times but whose three vertices co-occur fewer than $t$ times. Hence $\beta_1^{\mathrm{D}}(t)-\beta_1^{\mathrm{C}}(t)=\dim\ker\iota_*\le$ the number of phantom triangles.
\end{enumerate}
\end{proposition}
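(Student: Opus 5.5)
The plan is to compare the two complexes through their simplicial chain complexes over a field $\mathbb{F}$, using that they agree in degrees $\le 1$ and differ only from degree $2$ upward.

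\textbf{Part 1.} The vertices of $\KD(t)$ are the singletons with $w(\{v\})\ge t$, which is $V_t$. Its edges are the pairs with $w(\{u,v\})\ge t$, which are exactly the edges of $G(t)$. So the 1-skeleton of $\KD(t)$ is $G(t)$. By definition $\KC(t)$ is the flag complex of $G(t)$, so it has the same 1-skeleton. For the inclusion, take $\sigma\in\KD(t)$. Monotonicity of $w$ gives $w(\{u,v\})\ge w(\sigma)\ge t$ for every pair $\{u,v\}\subseteq\sigma$. Hence $\sigma$ is a clique of $G(t)$ and lies in $\KC(t)$.

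\textbf{Part 2.} Both complexes have the same chain groups $C_0$ and $C_1$ and the same boundary map $\partial_1$. Therefore they have the same cycle space $Z_1=\ker\partial_1$, and
\[
H_1(\KD(t))=Z_1/B^{\mathrm D},\qquad H_1(\KC(t))=Z_1/B^{\mathrm C},
\]
where $B^{\mathrm D}=\partial_2 C_2(\KD(t))$ and $B^{\mathrm C}=\partial_2C_2(\KC(t))$. Because $\KD(t)\subseteq\KC(t)$, we have $B^{\mathrm D}\subseteq B^{\mathrm C}$. The map $\iota_*$ is then the quotient map $Z_1/B^{\mathrm D}\to Z_1/B^{\mathrm C}$, which is surjective, so $\beta_1^{\mathrm C}(t)\le\beta_1^{\mathrm D}(t)$.

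\textbf{Part 3.} The kernel of this quotient map is $B^{\mathrm C}/B^{\mathrm D}$. The space $C_2(\KC(t))$ is spanned by the triangles of $\KC(t)$, and each such triangle either lies in $\KD(t)$ or is phantom. Boundaries of triangles in $\KD(t)$ already lie in $B^{\mathrm D}$. Hence $B^{\mathrm C}$ is spanned by $B^{\mathrm D}$ together with the boundaries $\partial\tau$ of the phantom triangles $\tau$. It follows that $B^{\mathrm C}/B^{\mathrm D}$ is spanned by the classes $[\partial\tau]$, each of which is a cycle on edges of $G(t)$.

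The phantom description follows from Part 1. A triangle $\tau$ of $\KC(t)$ has all three edges in $G(t)$, meaning each pair co-occurs at least $t$ times. It is missing from $\KD(t)$ exactly when $w(\tau)<t$. Rank–nullity for the surjection $\iota_*$ gives
\[
\beta_1^{\mathrm D}-\beta_1^{\mathrm C}=\dim\ker\iota_*,
\]
and this is at most the size of a spanning set, namely the number of phantom triangles.

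\textbf{Main care point.} The subtle step is Part 3: we must check that higher simplices of $\KC(t)$ add nothing to $B^{\mathrm C}$ beyond what their 2-faces already contribute. This holds because $B^{\mathrm C}$ is by definition the image of $\partial_2$ on $C_2$, which involves only triangles. Simplices of dimension $3$ and above affect only $H_2$ and higher, not $H_1$. The same argument works verbatim over any field.
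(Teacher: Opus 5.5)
Your proposal is correct and follows essentially the same route as the paper: the shared 1-skeleton and inclusion via monotonicity of $w$, the common cycle space $Z_1$ with $B_1(\KD(t))\subseteq B_1(\KC(t))$ making $\iota_*$ a quotient map, and $\ker\iota_*=B_1(\KC(t))/B_1(\KD(t))$ spanned by the boundaries of phantom triangles. Your explicit rank--nullity step and your remark that only 2-simplices enter $B_1$ just spell out what the paper leaves implicit.
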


\begin{proof}
(1) Both complexes have vertex set $V_t$ and edge set $\{\{u,v\}:w(\{u,v\})\ge t\}$. If $\sigma\in\KD(t)$, every pair in $\sigma$ has $w\ge w(\sigma)\ge t$, so $\sigma$ is a clique of $G(t)$.
(2) For any complex $K$, $H_1(K)=Z_1(K)/B_1(K)$ where the cycle space $Z_1$ depends only on the 1-skeleton. The two complexes share their 1-skeleton, so $Z_1(\KD)=Z_1(\KC)=:Z_1$, and $B_1(\KD)\subseteq B_1(\KC)$ because $\KD\subseteq\KC$. Thus $\iota_*$ is the quotient map $Z_1/B_1(\KD)\to Z_1/B_1(\KC)$, which is surjective.
(3) $\ker\iota_*=B_1(\KC)/B_1(\KD)$. $B_1(\KC)$ is spanned by boundaries of triangles of $\KC$; those of triangles in $\KD$ vanish in the quotient, leaving the boundaries of phantom triangles.
\end{proof}

\begin{remark}[Persistence]
Because $\KD(t)\subseteq\KC(t)$ for every $t$ with a common 1-skeleton, every first-homology class is born at the same filtration value in both complexes and dies no later in $\KC$ than in $\KD$.
Classes therefore fall into four types: filled at the same step, filled earlier by the clique complex, filled only by the clique complex (the kernel), and filled in neither (\Cref{fig:classes}).
\end{remark}

\begin{remark}[Approximation guarantees do not transfer to count filtrations]\label{rem:interleave}
When the relation comes from distances, \citet{huber2025flagifying} show that the Dowker--Rips filtration is multiplicatively 3-interleaved with the Dowker filtration ($D_\varepsilon\subseteq DR_\varepsilon\subseteq D_{3\varepsilon}$), which justifies using it as a cheaper drop-in replacement. Co-occurrence counts carry no such metric structure. A cycle of three pairs that never co-occur as a triple is filled in $\KC(t)$ as soon as its edges appear, but, unless other observed groups fill it, it stays open in $\KD(t)$ down to $t=1$; no rescaling of $t$ closes it. The kernel classes reported in \Cref{sec:results} are of exactly this kind.
\end{remark}

By \Cref{prop:erase}, the entire loss caused by the clique construction appears in the phantom triangles and in $\ker\iota_*$, and by \Cref{rem:interleave}, nothing in a count filtration guarantees that this loss is small, as it is in the metric case. The loss therefore has to be measured on each data set. We define the measures used in \Cref{sec:results} as follows.

\paragraph{Measures of loss.}
The \emph{phantom rate} at $t$ is $|\KC(t)_2\setminus\KD(t)_2|/|\KC(t)_2|$, the share of clique triangles that never co-occurred $t$ times. The \emph{kernel dimension} is $\dim\ker\iota_*$. Along the filtration we report the number of classes filled only by the clique complex and the number filled earlier.
These correspond to two forms of loss. The \emph{permanent} loss consists of the kernel classes, erased by triangles that are phantom even at $t=1$, i.e.\ whose triples never co-occurred. The \emph{transient} loss consists of the classes filled earlier: the triangles that fill them are phantom only at intermediate $t$, where the pairs have co-occurred $t$ times but the triple has not. A complex whose phantom rate at $t=1$ is zero can still have a large transient loss.

\medskip
\Cref{prop:erase} concerns the loss between the two constructions on a fixed vertex set. The clique construction has a second problem: its result depends on which vertex set is chosen in the first place.
Because observed groups are also a bipartite relation $R$ (\Cref{sec:background}), complexes can be built from the same data on the group side as well as on the entity side. In learning data this is the choice between taking learners as vertices and taking the conversations or threads they joined as vertices, a choice usually left to the analyst.
We therefore ask whether the result depends on that choice, for each of the two constructions.

\paragraph{Two views of a bipartite relation.}
The relation $R\subseteq V\times\{1,\dots,m\}$ of \Cref{sec:background} is the case $X=V$, $Y=\{1,\dots,m\}$ below.
\begin{proposition}[Duality: all dimensions for Dowker, only up to $H_1$ for clique complexes]\label{prop:duality}
Let $R\subseteq X\times Y$ be a filtered relation.
(i) The Dowker complexes on $X$ and on $Y$ have isomorphic persistent homology in every dimension~\citep{dowker1952,chowdhury2018dowker}.
(ii) Their flagifications, i.e.\ the clique complexes of the two one-mode projections, have isomorphic persistent homology in dimensions~0 and~1, but not in general in dimensions $\ge2$ (\citealp[Thm.~1.8 with $k=2$, Prop.~3.4]{huber2025flagifying}).
\end{proposition}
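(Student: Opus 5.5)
For part~(i) we plan to use the functorial Dowker theorem. Let $R_t\subseteq X\times Y$ denote the relation at filtration value $t$, so that $R_s\subseteq R_t$ for $s\ge t$. Write $D_X(R_t)$ for the complex of finite $\sigma\subseteq X$ having a common witness $y$ with $\sigma\times\{y\}\subseteq R_t$, and define $D_Y(R_t)$ symmetrically. Dowker's theorem, in the functorial form of \citet{chowdhury2018dowker}, gives homotopy equivalences $D_X(R_t)\simeq D_Y(R_t)$. These are compatible with the inclusions induced by $R_s\subseteq R_t$, up to homotopy. The standard route is through the nerve lemma on the covers $\{\mathrm{st}(y)\}_y$ and $\{\mathrm{st}(x)\}_x$, or equivalently through the bi-complex of rectangles $\sigma\times\tau\subseteq R_t$, which projects to both sides with contractible fibres. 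Homotopy-commuting squares induce strictly commuting squares on homology over a field, so the persistence modules $H_k(D_X(R_\bullet))$ and $H_k(D_Y(R_\bullet))$ are isomorphic for every $k$. We will cite this rather than reprove it, and only check that the count filtration of \Cref{sec:background} is a filtered relation in this sense.

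For part~(ii), positive direction, we plan to reduce both dimensions to part~(i) using \Cref{prop:erase}. In dimension $0$ we use that $D_X$ and its flagification $\mathrm{Fl}(D_X)$ have the same 1-skeleton, so $H_0$ agrees naturally in $t$. Composing with (i) gives $H_0(\mathrm{Fl} D_X)\cong H_0(D_X)\cong H_0(D_Y)\cong H_0(\mathrm{Fl} D_Y)$. Dimension $1$ is the real step, because by \Cref{prop:erase} flagification can kill $H_1$ classes through phantom triangles. So we cannot simply say that $H_1$ is unchanged. Instead we must show that the kernels on the two sides correspond under the Dowker isomorphism.

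Our first attempt at dimension~$1$ works with an intermediate complex $D^{(2)}_X$, which adds a 2-simplex $\{x_1,x_2,x_3\}$ whenever the three pairs $\{x_i,x_j\}$ are all in $D_X$. Then $H_1(D^{(2)}_X)=H_1(\mathrm{Fl} D_X)$, since $H_1$ depends only on the 2-skeleton. We would then look for a relation $R'\supseteq R$ whose Dowker complexes on $X$ and $Y$ have these 2-skeleta up to homotopy in low degrees. Alternatively, we would directly compare the edge-path groups (fundamental groupoids, then abelianize) of $\mathrm{Fl} D_X$ and $\mathrm{Fl} D_Y$, using the combinatorial description of $\pi_1$ via edges modulo triangles. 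This is the main obstacle, and here we intend to invoke \citealp[Thm.~1.8 with $k=2$]{huber2025flagifying}. As we read it, that theorem says the $k$-skeletal flagifications of the two sides have naturally isomorphic homology in degrees $<k$. With $k=2$ this covers degrees $0$ and $1$, and naturality in $t$ gives the isomorphism of persistence modules.

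For the negative part of~(ii), we will cite the counterexample of \citealp[Prop.~3.4]{huber2025flagifying}. If we want it self-contained, we would build one as follows. Take $Y$ to be the six edges of the boundary of a tetrahedron on $X=\{1,2,3,4\}$, each edge witnessing its two endpoints. Then $\mathrm{Fl} D_X$ is the full 3-simplex, which is contractible. On the other side, $\mathrm{Fl} D_Y$ is the flag complex on six vertices, with two vertices adjacent when the edges share an endpoint. This is the octahedron boundary, so $H_2\neq0$ there.
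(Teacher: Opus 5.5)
Your proposal is correct and follows the paper's own argument. Part (i) is the functorial Dowker theorem of Chowdhury--M\'emoli, the positive half of (ii) is Huber et al.'s Thm.~1.8 with $k=2$, and the negative half is the tetrahedron/octahedron example that the paper writes out as four learners and six two-person conversations, with the roles of $X$ and $Y$ swapped. (Your direct $H_0$ reduction is fine, and the exploratory $D^{(2)}_X$ route can be dropped once you cite the theorem.)

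One caution on your side check, which the statement does not need. For $t\ge2$ the count filtration $\KD(t)$ is in general not the Dowker complex of a subrelation of $V\times\{1,\dots,m\}$, because it can have more than $m$ facets. Viewing it as a filtered relation therefore forces a different witness side, such as $t$-element sets of groups.
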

Thus the choice of side does not matter for Dowker complexes, whereas for clique complexes it can change the higher-dimensional topology.
The counterexample of \citet[Prop.~3.4]{huber2025flagifying}, the vertices and edge midpoints of a regular tetrahedron, reads naturally as learning data: four learners and six two-person conversations, one for every pair.
On the learner side the Dowker complex is the complete graph $K_4$ ($\beta_1=3$) and the clique complex is a solid tetrahedron: the clique construction asserts that all four worked together.
On the conversation side the Dowker complex again has $\beta_1=3$, while the clique complex is the boundary of an octahedron ($\beta_1=0$, $\beta_2=1$), a two-dimensional void that exists in neither the data nor the other view.
Both clique complexes have $\beta_1=0$, as (ii) requires, and both have erased the three cycles of the data (\Cref{prop:erase}).
Details are given in \Cref{app:duality}.

\paragraph{When do the two constructions agree?}
\Cref{prop:erase} measures the loss once both complexes have been computed. The next statement says, before any homology is computed, exactly when there is no loss, and locates the configurations of groups that produce it.
A hypergraph is \emph{conformal} if every clique of its 2-section (the graph joining two vertices whenever some hyperedge contains both) is contained in a hyperedge~\citep{berge1989hypergraphs}.

\begin{proposition}[A local criterion for agreement]\label{prop:conformal}
For $t\ge1$ let $\mathcal{F}_t$ be the set of maximal simplices (facets) of $\KD(t)$, viewed as a hypergraph on $V_t$.
\begin{enumerate}
  \item $\KC(t)=\KD(t)$ if and only if $\mathcal{F}_t$ is conformal.
  \item Equivalently (Gilmore's criterion~\citep{berge1989hypergraphs}), $\KC(t)=\KD(t)$ if and only if for any three facets $F_1,F_2,F_3\in\mathcal{F}_t$ the set $(F_1\cap F_2)\cup(F_2\cap F_3)\cup(F_1\cap F_3)$ is contained in some facet.
  \item A triangle $\{a,b,c\}$ of $G(t)$ is phantom if and only if no facet contains it. In particular, at $t=1$ a phantom triangle requires three distinct observed groups that contain $\{a,b\}$, $\{b,c\}$ and $\{a,c\}$, respectively, while no group contains all three.
\end{enumerate}
\end{proposition}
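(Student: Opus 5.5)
The plan is to reduce everything to two elementary observations about $\KD(t)$: it is determined by its facets $\mathcal{F}_t$ (a set is a simplex iff it lies in some facet), and the 2-section of $\mathcal{F}_t$ is exactly $G(t)$. For the second point, an edge $\{u,v\}$ with $w(\{u,v\})\ge t$ is a simplex of $\KD(t)$ and so lies in a facet. Conversely, two vertices lying together in a facet span an edge of $\KD(t)$, and by \Cref{prop:erase}(1) that edge belongs to $G(t)$. Vertices are handled the same way, since $V_t$ is the vertex set of both. Hence the cliques of the 2-section of $\mathcal{F}_t$ are precisely the simplices of $\KC(t)$.

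For part (1), suppose $\mathcal{F}_t$ is conformal. Then every clique of $G(t)$ lies in some facet and is therefore a simplex of $\KD(t)$. This gives $\KC(t)\subseteq\KD(t)$, and the reverse inclusion is \Cref{prop:erase}(1). Conversely, if $\KC(t)=\KD(t)$, every clique of $G(t)$ is a simplex of $\KD(t)$ and so lies in a facet, which is conformality.

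For part (2), I would invoke Gilmore's theorem as stated in \citet{berge1989hypergraphs}: a hypergraph is conformal iff for any three hyperedges the union of their pairwise intersections lies in some hyperedge. Combining this with (1) gives the claim immediately. This is the only nontrivial step. If a self-contained argument were wanted, I would prove the nontrivial direction by induction on clique size. Take a minimal clique $C$ not contained in any facet. Pick three of its vertices $x,y,z$ (a clique of size 2 is an edge and already lies in a facet). By minimality, $C\setminus\{x\}$, $C\setminus\{y\}$ and $C\setminus\{z\}$ lie in facets $F_1,F_2,F_3$. Every vertex of $C$ lies in at least two of these sets, so $C$ is contained in the union of the pairwise intersections. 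Gilmore's condition then puts $C$ inside a facet, a contradiction. The easy direction is that this union of pairwise intersections is a clique of the 2-section, so conformality puts it in a facet. I expect this induction to be the main obstacle, though it is short.

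For part (3), a triangle $\{a,b,c\}$ of $G(t)$ lies in $\KC(t)$, so it is phantom iff it is not in $\KD(t)$, i.e.\ iff no facet contains it. At $t=1$, note that $\KD(1)$ is generated by the groups $h_i$ themselves. Each edge of the triangle has $w\ge1$, so for every pair some group contains that pair. If one group contained two of the pairs, it would contain $\{a,b,c\}$, which is excluded. Therefore the three pairs are witnessed by three distinct groups, none of which contains the whole triple.
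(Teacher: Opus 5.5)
Your proposal is correct and follows the paper's own argument: you identify the 2-section of $\mathcal{F}_t$ with $G(t)$ and note that a set is a simplex of $\KD(t)$ iff it lies in a facet, which gives (1); you invoke Gilmore's criterion for (2); and you read (3) off the definition of a phantom triangle, together with the observation that a group containing two of the three pairs contains the whole triple. Your extra minimal-counterexample proof of the nontrivial direction of Gilmore's criterion is also sound: every vertex of $C$ lies in two of the facets containing $C\setminus\{x\}$, $C\setminus\{y\}$ and $C\setminus\{z\}$. It makes explicit the step the paper simply cites from Berge.
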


\begin{proof}
The 2-section of $\mathcal{F}_t$ is the 1-skeleton of $\KD(t)$, which is $G(t)$ (\Cref{prop:erase}). A vertex set lies in some facet if and only if it is a simplex of $\KD(t)$. Hence $\mathcal{F}_t$ is conformal if and only if every clique of $G(t)$ is a simplex of $\KD(t)$, i.e.\ $\KC(t)\subseteq\KD(t)$; with $\KD(t)\subseteq\KC(t)$ this gives (1). (2) is Gilmore's characterization of conformal hypergraphs applied to $\mathcal{F}_t$. (3) restates the definition of a phantom triangle; at $t=1$ the facets are the maximal observed groups, and if a single group contained two of the three pairs it would contain the triangle.
\end{proof}

\Cref{prop:conformal} turns the question ``does the choice of construction matter for these data?'' into a check on triples of groups that needs no homology computation.
It also says where the loss comes from: the clique construction fails exactly where groups overlap pairwise without any group containing their overlaps---in learning data, three conversations that pairwise share participants or codes, none of which brought the three together.
Laminar collections of groups (any two are disjoint or nested) satisfy the criterion at every threshold, whereas, when any two groups share at most one entity, every three groups that pairwise share three different entities produce a phantom triangle. Real group data lie between these extremes, which is why the size of the loss has to be measured (\Cref{sec:results}). Turning the criterion into quantitative bounds on the phantom rate in terms of the group-size distribution and overlap structure is part of the decision rules of \Cref{sec:outlook}.
Note that $\KC(t)\neq\KD(t)$ does not imply that phantom triangles exist: the clique construction can also fill a hollow tetrahedron all of whose triangles were observed. Only phantom triangles, however, affect $H_1$ (\Cref{prop:erase}).

\section{Which construction for which data}\label{sec:rules}

\begin{table}[t]
\centering
\caption{Matching the complex to the data type.}
\label{tab:rules}
\small
\begin{tabular}{@{}p{0.27\linewidth}p{0.3\linewidth}p{0.36\linewidth}@{}}
\toprule
Data type & Appropriate construction & Clique complex of pairs is\ldots \\
\midrule
Point cloud / distances & Vietoris--Rips (a clique complex) & appropriate \\
Observed groups / bipartite relation & Dowker complex & inappropriate: fills phantom triangles \\
Transitions (directed) & order-preserving (directed flag, path) & inappropriate: symmetrization loses direction \\
Time-stamped groups & cumulative or zigzag Dowker & inappropriate for both reasons \\
\bottomrule
\end{tabular}
\end{table}

A clique complex is the right object when the data \emph{are} pairwise: for a point cloud, pairwise distances are all there is, and the Vietoris--Rips complex is the standard choice~\citep{carlsson2009topology}.
When the data are observed groups, the groups themselves are the primary information and the pairwise network is a lossy projection of it~\citep{torres2021why}; building a clique complex from that projection reconstructs higher-order structure from the very information that cannot contain it (\Cref{prop:erase}).
\Cref{tab:rules} summarizes the correspondence.

\medskip
How do these rules apply to the tools actually used in learning analytics? Taking \texttt{Nestimate}, the companion package of TNA, as an example, we first check what its simplicial functions build, showing that the clique construction is applied to observed-group data as well. We then show that the directional information in TNA data is lost along the way (the ``Transitions (directed)'' row of \Cref{tab:rules}).

\paragraph{What \texttt{Nestimate} builds.}
\texttt{build\_simplicial(type = "clique")} symmetrizes a weight matrix $W$ by the elementwise maximum $\max(W,W^\top)$, keeps the edges whose weight is positive and at least the threshold, and enumerates cliques up to size \texttt{max\_dim}$+1$.
Two other types exist: \texttt{"pathway"} turns the higher-order paths found by \texttt{build\_hon} or \texttt{build\_hypa} into simplices, and \texttt{"vr"} builds a Vietoris--Rips complex from a distance matrix.
\texttt{persistent\_homology} applies the clique construction along a sequence of thresholds.
Observed groups can be represented in the same package as a hypergraph (\texttt{bipartite\_groups}), but in version 0.8.5 this object cannot be passed to \texttt{build\_simplicial}, which stops with ``Cannot extract adjacency matrix from `net\_hypergraph'\,'' (checked on 29 September 2026).
We re-implemented the clique construction and reproduced the package output exactly: for both TNA data sets and all four thresholds used below, the simplices are identical.

\paragraph{Direction.}
TNA models are directed: the probability of a transition from state $a$ to state $b$ generally differs from that of a transition from $b$ to $a$, and in the two TNA data sets this difference is large.
Among the pairs with transitions in both directions, 69\% (\texttt{group\_regulation}) and 42\% (\texttt{human\_long}) have one direction at least twice as likely as the other.
The antisymmetric share $\|W-W^\top\|_F/\|W+W^\top\|_F$ (off-diagonal), which measures the asymmetry of the whole matrix (0 if symmetric, 1 if every pair has transitions in one direction only), is 0.66 and 0.55, so direction accounts for a major part of the transition-probability matrices.
Symmetrization by the elementwise maximum discards this information before any complex is built.
Constructions that keep direction, such as directed flag complexes~\citep{lutgehetmann2020computing} and path homology~\citep{grigoryan2012homologies}, are left to future work; in this paper we treat co-occurrence only.

\section{Data and methods}\label{sec:data}

\Cref{sec:theory} showed that the clique complex can only erase first homology and that the erased classes come from phantom triangles (\Cref{prop:erase}). The proposition locates the loss but does not fix its size, which depends on the data.
In this and the next section we measure the loss on real data: Stack Exchange co-occurrence data, where the group structure is explicit, and the data distributed with \texttt{tna}, which are used in learning analytics practice. For each data set we compare the two constructions and set the difference against the level expected by chance under a null model.

\paragraph{Data.}
\emph{Stack Exchange co-occurrence data}: we built four data sets from the Stack Exchange data dump of 31 December 2025~\citep{stackexchange2025dump}, following the definitions of \citet{benson2018simplicial}.
In \texttt{tags-matheducators} (3{,}157 groups, 2014--2025), \texttt{tags-academia} (39{,}019) and \texttt{tags-physics} (223{,}639, 2010--2025) a group is the set of tags on a question (at most five); in \texttt{threads-matheducators} (3{,}572 groups) a group is the asker together with the answerers of a question.
Math Educators is a community of teachers discussing mathematics teaching. Stack Exchange content is licensed under CC BY-SA.

\emph{TNA data}: \texttt{tna::group\_regulation}, synthetic data distributed with \texttt{tna}: 2{,}000 sequences and 27{,}533 events coded into nine categories of regulation in collaborative learning (e.g.\ \emph{plan}, \emph{monitor}, \emph{consensus}; median sequence length 14); and \texttt{Nestimate::human\_long}, real data from human--AI pair-programming sessions~\citep{saqr2026vibe}: 10{,}796 human turns in 429 sessions, in which nine action codes occur (e.g.\ \emph{command}, \emph{inquire}, \emph{verify}; median length 18).
A group is the set of distinct codes in a sliding window of $w$ consecutive events within one sequence; $w=3$ yields 23{,}625 and 9{,}948 windows.
Windows can contain a repeated code, so some groups have fewer than $w$ elements.
Results for $w=4$ are in the Appendix.
We use windows rather than whole sequences because \texttt{build\_network(method = "co\_occurrence")} counts pairs over the whole sequence regardless of its \texttt{window\_size} argument (\Cref{sec:results}).

\paragraph{Comparison.}
For each data set we build $\KD(t)$ and $\KC(t)$ on the same 1-skeleton (\Cref{eq:KD,eq:KC}) and compute $H_0$ and $H_1$ persistence over $\mathbb{F}_2$ by column reduction, with no cap on group size.
For every $H_1$ class born in the common 1-skeleton we record its death in each complex and classify it as in \Cref{fig:classes}.

\paragraph{Null model.}
We compare each statistic with a null model that preserves the size of every group and the number of groups each entity belongs to, sampled by repeated double swaps of entities between groups (rejecting swaps that create duplicates), $10\times$ the number of incidences per sample.
Each of the $N=200$ samples is drawn independently from the observed data.
We report $z=(\text{observed}-\text{null mean})/\text{null s.d.}$ and empirical one-sided $p$-values, whose minimum attainable value is $1/(N+1)\approx0.005$.
To compare curves across data sets we index the filtration by the fraction of 1-skeleton edges present.

\paragraph{Reproducibility.}
All computations use Python~3 with NumPy; the \texttt{tna} (1.2.3) and \texttt{Nestimate} (0.8.5) outputs were produced in R~4.6.1.
Code and data availability is described at the end of the paper.

\section{Results}\label{sec:results}

\begin{table}[t]
\centering
\caption{Clique versus Dowker complexes on the same 1-skeleton. Phantom rate: share of clique triangles whose three vertices never co-occurred. Max $\beta_1$: maximum of $\beta_1$ along the filtration. $H_1$ classes: among the first-homology classes born on the common 1-skeleton, the number filled only by the clique complex (kept open to the end by the Dowker complex) and the number filled by both but earlier by the clique complex.}
\label{tab:main}
\small
\begin{tabular}{@{}lrrrrrrr@{}}
\toprule
 & & & & \multicolumn{2}{c}{max $\beta_1$} & \multicolumn{2}{c}{$H_1$ classes} \\
\cmidrule(lr){5-6}\cmidrule(l){7-8}
Data & $|V|$ & $|E|$ & \shortstack{phantom\\rate} & clique & Dowker & \shortstack{filled only by\\clique (kernel)} & \shortstack{filled earlier\\by clique} \\
\midrule
tags-matheducators    & 227     & 3{,}309  & 0.81 & 3  & 680      & 522     & 1{,}015  \\
threads-matheducators & 3{,}796 & 34{,}583 & 0.46 & 49 & 2{,}012  & 1{,}908 & 2{,}710  \\
tags-academia         & 470     & 18{,}676 & 0.88 & 5  & 4{,}069  & 1{,}443 & 8{,}661  \\
tags-physics          & 901     & 76{,}455 & 0.91 & 19 & 10{,}951 & 1{,}970 & 39{,}456 \\
\midrule
group\_regulation ($w{=}3$) & 9 & 36 & 0.00 & 0 & 11 & 0 & 28 \\
human\_long ($w{=}3$)       & 9 & 36 & 0.00 & 0 & 23 & 0 & 28 \\
\bottomrule
\end{tabular}
\end{table}

\begin{table}[t]
\centering
\caption{Comparison with a degree- and size-preserving null model ($N=200$): phantom rate of the clique complex (observed and null mean) and $z$-scores of the maximum $\beta_1$. Negative values mean that the observed groups produce fewer cycles than randomized groups. n/a: $\beta_1=0$ in the data and in every null sample, so $z$ is undefined.}
\label{tab:null}
\small
\begin{tabular}{@{}lrrrr@{}}
\toprule
 & \multicolumn{2}{c}{phantom rate} & \multicolumn{2}{c}{$z$ of max $\beta_1$} \\
\cmidrule(lr){2-3}\cmidrule(l){4-5}
Data & observed & null mean & clique complex & Dowker complex \\
\midrule
tags-matheducators    & 0.809 & 0.869 & 1.4    & $-13.4$  \\
threads-matheducators & 0.464 & 0.534 & $-0.5$ & $-11.6$  \\
tags-academia         & 0.883 & 0.919 & 15.0   & $-53.4$  \\
tags-physics          & 0.911 & 0.955 & 135.2  & $-270.1$ \\
\midrule
group\_regulation ($w{=}3$) & 0.000 & 0.001 & n/a & 0.9     \\
human\_long ($w{=}3$)       & 0.000 & 0.000 & n/a & 2.8 \\
\bottomrule
\end{tabular}
\end{table}

\begin{figure}[t]
  \centering
  \includegraphics[width=\linewidth]{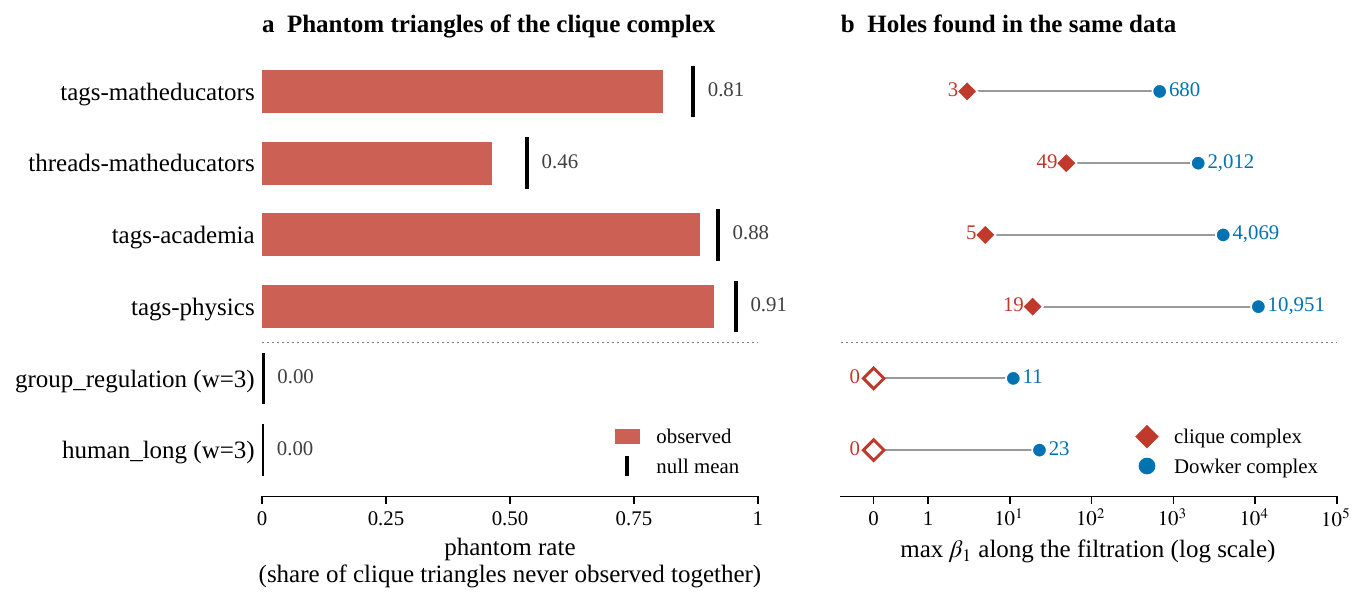}
  \caption{Phantom triangles and holes. (a) Phantom rate of the clique complex, i.e.\ the share of its triangles whose three vertices never co-occurred (bars), with the mean under the degree- and size-preserving null model (black ticks). (b) Maximum $\beta_1$ along the filtration for the clique (red diamonds) and Dowker (blue circles) complexes; open diamonds mark $\beta_1=0$. Above the dotted line: Stack Exchange; below: TNA example data ($w=3$).}
  \label{fig:phantom}
\end{figure}

\begin{figure}[t]
  \centering
  \includegraphics[width=\linewidth]{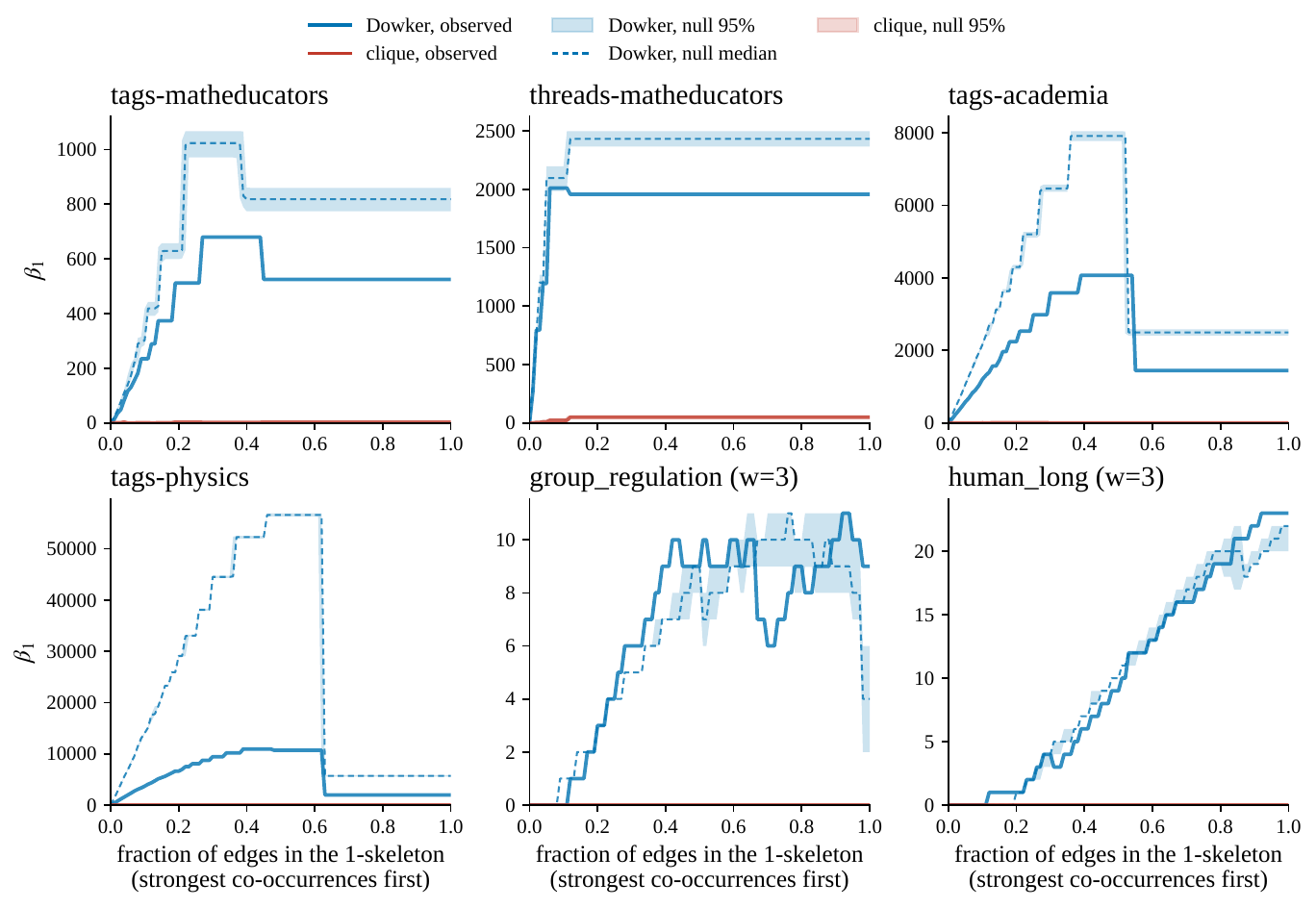}
  \caption{$\beta_1$ along the count filtration for the Dowker (blue) and clique (red) complexes, with the 95\% band and median of the null model. The horizontal axis is the fraction of 1-skeleton edges present, strongest co-occurrences first.}
  \label{fig:curves}
\end{figure}

\begin{figure}[t]
  \centering
  \includegraphics[width=\linewidth]{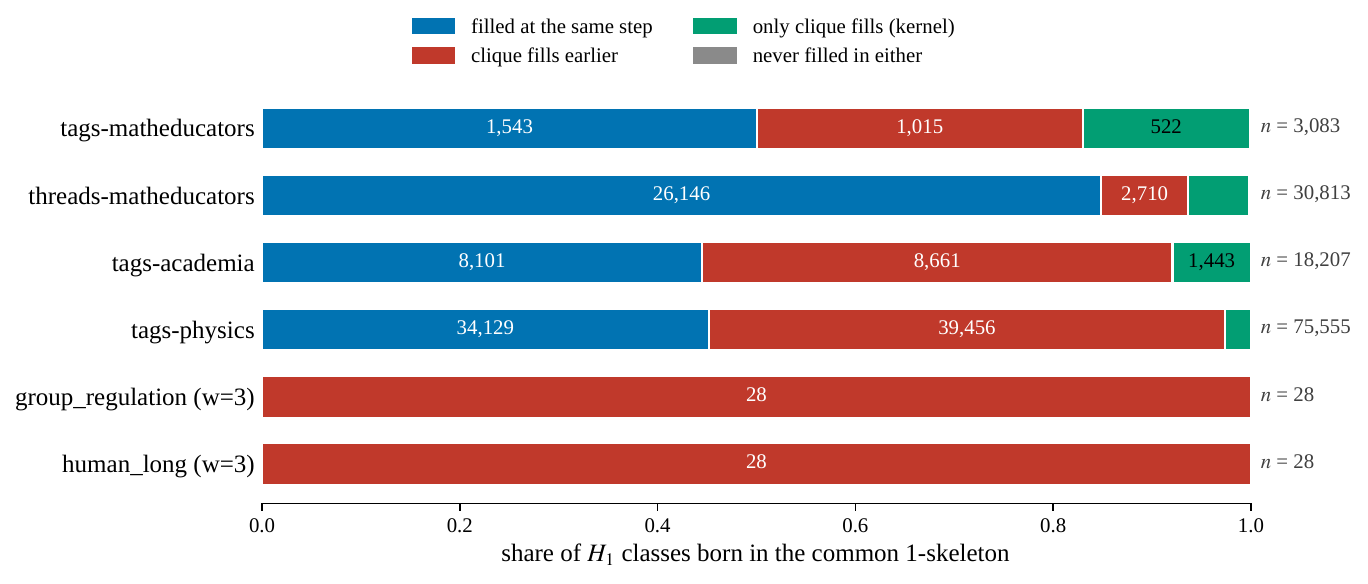}
  \caption{Fate of the $H_1$ classes born in the common 1-skeleton.}
  \label{fig:classes}
\end{figure}

\begin{figure}[t]
  \centering
  \includegraphics[width=0.7\linewidth]{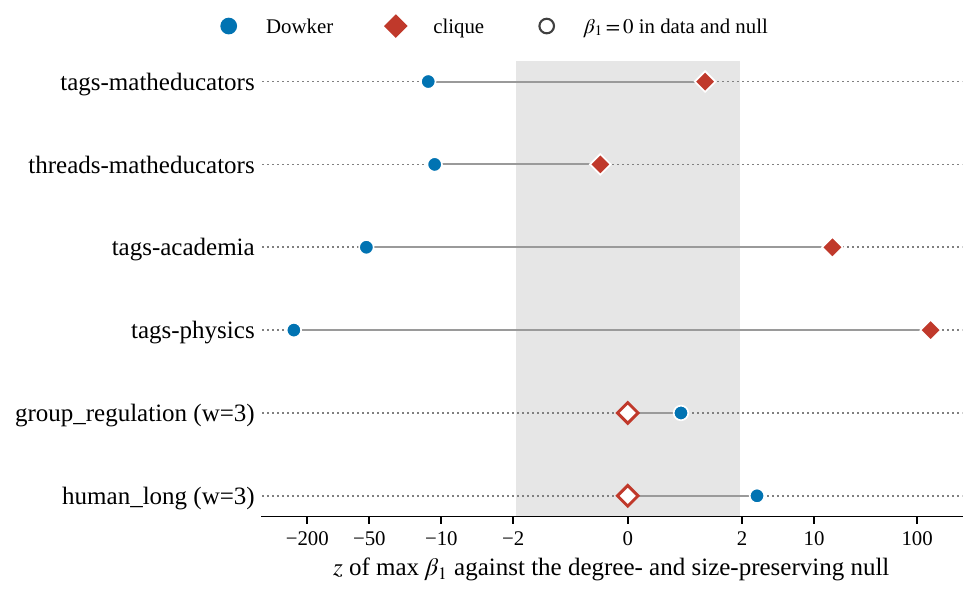}
  \caption{$z$-scores of the maximum $\beta_1$ against the null model. The grey band marks $|z|<1.96$. Open markers: $\beta_1=0$ in the data and in every null sample.}
  \label{fig:z}
\end{figure}

This section reports the measurements obtained with the methods of \Cref{sec:data}. We first show that in the Stack Exchange group data most clique triangles are phantom and many first-homology classes are erased, a permanent loss (\Cref{tab:main}). We then show that on the TNA example data no triangle is phantom, yet the clique complex shows no first homology at all, because it fills every hole that the Dowker complex opens partway through the filtration, a transient loss. Next we show that the two constructions lead to different conclusions against the null model, and that neither form of loss is visible at fixed thresholds. Finally, we report two further behaviours of the tooling found in the course of the analysis.

\paragraph{Phantom triangles are the majority in group data.}
In the four Stack Exchange data sets the phantom rate---the share of triangles of the full clique complex whose three vertices never co-occurred---is 0.46--0.91 (\Cref{fig:phantom}a and \Cref{tab:main}).
In \texttt{tags-physics}, where it is highest (0.91), roughly nine in ten triangles filled by the clique complex correspond to no observed group.
The Dowker complex, by contrast, fills only triples that were actually observed, so its phantom rate is zero by definition: it never adds a 2-simplex for a triple that does not occur in the data.
Under a degree- and size-preserving null model the phantom rate of the clique complex is even higher (0.534--0.955; \Cref{fig:phantom}a and \Cref{tab:null}), so the abundance of phantom triangles is not an accident of these data sets but a consequence of applying the clique construction to group data at all.
They are not harmless: the clique complex fills 522--1{,}970 first-homology classes that the Dowker complex keeps open to the end of the filtration (the kernel of \Cref{prop:erase}; the permanent loss), and fills a further 1{,}015--39{,}456 classes earlier than the data warrant (the transient loss; \Cref{fig:classes}).
Along the filtration the Dowker complex reaches $\beta_1$ values of 680--10{,}951, whereas the clique complex never exceeds 49 (\Cref{fig:phantom}b and \Cref{fig:curves}).

\paragraph{On the TNA example data the loss is transient but complete.}
With nine codes, every triple eventually co-occurs, so the phantom rate of the full complex is zero in both data sets.
The difference lies in \emph{when} triangles are filled.
The clique complex has $\beta_1=0$ at every threshold: a triangle is filled the moment its last edge appears.
The Dowker complex opens up to 11 (\texttt{group\_regulation}) and 23 (\texttt{human\_long}) cycles along the filtration---triples whose pairs co-occur strongly but which have not yet co-occurred as triples---and all 28 first-homology classes are filled earlier by the clique complex than by the Dowker complex (\Cref{tab:main,fig:classes}).
The loss is thus entirely transient: no class is erased for good, yet none of the holes is visible in the clique complex.

\paragraph{Against the null model, the two constructions tell different stories.}
That the clique complex has fewer cycles is guaranteed by \Cref{prop:erase} ($\beta_1^{\mathrm{C}}\le\beta_1^{\mathrm{D}}$); the question here is whether the two constructions lead to the same conclusion about the data.
The Dowker complex departs strongly from the null in all four Stack Exchange data sets (\Cref{tab:null,fig:z}): observed groups produce far fewer cycles than randomized groups with the same sizes and degrees ($z$ from $-11.6$ to $-270.1$), i.e.\ the observed groups are more closed than random.
On the TNA data the departures are small and point the other way: $z=0.9$ for \texttt{group\_regulation} and $2.8$ for \texttt{human\_long} with $w=3$ ($3.5$ and $0.5$ with $w=4$; \Cref{app:w4}).
The clique complex cannot show any of this on the TNA data, where $\beta_1=0$ in the data and in every null sample, and it shows no significant departure on the two Math Educators data sets.
On \texttt{tags-academia} and \texttt{tags-physics} the clique complex does depart from the null, but the departure concerns 5 and 19 cycles against thousands for the Dowker complex, and it points the other way (more cycles than random).
This null model also destroys pairwise structure, so ``more closed than random'' is not by itself evidence of irreducible many-body integration (\Cref{sec:limits}); what the comparison shows is that the two constructions lead to different conclusions about the same data.

\paragraph{At fixed thresholds the discrepancy is invisible.}
In practice \texttt{build\_simplicial} is applied at a single threshold to the symmetrized transition matrix.
At the thresholds 0.02--0.2 the two complexes have the same triangles and the same $\beta_1$ in both TNA data sets (\Cref{tab:fixed}).
Along the count filtration, however, the Dowker complex opens up to 11 and 23 cycles that the clique complex fills earlier (\Cref{tab:main}).
These cycles exist only in the middle of the filtration, where clique triangles enter before their triples have co-occurred, so a fixed-threshold analysis cannot reveal the transient loss.

\begin{table}[t]
\centering
\caption{Fixed-threshold analysis on the symmetrized transition matrices (as with \texttt{Nestimate::build\_simplicial}), windows $w=3$. Clique triangle counts match the package output exactly.}
\label{tab:fixed}
\small
\begin{tabular}{@{}lrrrrrr@{}}
\toprule
Data & threshold & edges & clique $\triangle$ & Dowker $\triangle$ & phantom & $\beta_1$ (C / D) \\
\midrule
group\_regulation & 0.02 & 34 & 70 & 70 & 0 & 0 / 0 \\
                  & 0.05 & 30 & 50 & 50 & 0 & 0 / 0 \\
                  & 0.10 & 22 & 20 & 20 & 0 & 0 / 0 \\
                  & 0.20 & 13 & 3  & 3  & 0 & 2 / 2 \\
human\_long       & 0.02 & 36 & 84 & 84 & 0  & 0 / 0 \\
                  & 0.05 & 36 & 84 & 84 & 0  & 0 / 0 \\
                  & 0.10 & 23 & 22 & 22 & 0  & 0 / 0 \\
                  & 0.20 & 7  & 0  & 0  & 0  & 0 / 0 \\
\bottomrule
\end{tabular}
\end{table}

\paragraph{Two further behaviours of the tooling.}
We note two behaviours of \texttt{Nestimate} 0.8.5 that matter for simplicial analyses.
First, \texttt{build\_network(method = "co\_occurrence", window\_size = k)} returned identical matrices for $k=2,3,4$: pairs are counted over whole sequences with multiplicity, so the window size does not enter.
Second, \texttt{persistent\_homology} reports the Betti numbers of the complex truncated at \texttt{max\_dim}; in its top dimension these are artefacts of the truncation (for example $\beta_2=56=\binom{8}{3}$ for the complete graph on nine codes, whose clique complex is contractible).
Both are easy to address; we mention them because they affect the interpretation of published analyses.

\section{Discussion}\label{sec:discussion}

This section discusses what the results mean for learning-analytics practice (\Cref{sec:implications}), sets out the limitations of the study (\Cref{sec:limits}) and outlines future work (\Cref{sec:outlook}).

\subsection{Implications for learning-analytics practice}\label{sec:implications}
Our results suggest three practical points for simplicial analyses of learning data.

\paragraph{Choose the complex from the data type.}
The choice of complex should be made from the data type, before any threshold is chosen (\Cref{tab:rules}).
Much of the data in learning analytics comes as observed groups: windows of coded discourse, forum threads, records of group activity.
Once such data are reduced to a pairwise co-occurrence matrix, the information about which triples actually occurred together is lost before any complex is built, and no later choice of threshold or pipeline can recover it (\Cref{prop:erase}).
When the input consists of observed groups, the groups themselves should therefore be kept as the input and a Dowker complex built from them.

\paragraph{Agreement at a single threshold does not validate the construction.}
On the TNA example data the two constructions have the same $\beta_1$ at every threshold used in practice (\Cref{tab:fixed}).
Along the whole filtration, however, the clique complex closes every cycle that the Dowker complex opens, so only the Dowker complex can be compared with the null model ($\beta_1^{\mathrm{C}}=0$ in the data and in every null sample; \Cref{sec:results}).
Because these cycles exist only partway through the filtration, before the corresponding triples co-occur, a comparison at particular thresholds is unlikely to reveal the difference.
Checking the construction requires comparing whole filtrations and comparing both against a null model.

\paragraph{Report both forms of loss as diagnostics.}
The phantom rate needs only a count of the clique triangles and of the triples actually observed, and the kernel dimension follows from computing the persistent homology of both complexes from the same input.
Neither requires additional data, and both are cheap.
Reported alongside clique-based results, they let readers judge how much of the reported topology depends on the choice of construction.
These two, however, measure only the permanent loss, and a phantom rate of zero does not mean that the two constructions agree: both TNA data sets have a phantom rate of 0.00, yet in each of them all 28 first-homology classes are filled earlier by the clique complex, which shows no hole at all (\Cref{tab:main}).
The number of classes filled earlier by the clique complex, which measures the transient loss, should therefore be reported together with the phantom rate and the kernel dimension.
Which form dominates depends on the data: in our data, groups over many entities (Stack Exchange) showed a large permanent loss, whereas short windows over nine codes (TNA), in which every triple eventually co-occurs, showed only the transient loss.
Before any homology is computed, \Cref{prop:conformal} gives a direct check: if the facets at a threshold form a conformal hypergraph, the two constructions coincide there, and each failure of Gilmore's criterion points to the groups responsible.

\subsection{Limitations}\label{sec:limits}
This study has five limitations.
\begin{enumerate}
  \item[(i)] \textbf{Design of the null model.} Our null model preserves group sizes and the degree of each entity but also destroys the pairwise co-occurrence structure. That the observed data have fewer cycles than random, i.e.\ are ``more closed than random'', is therefore not by itself evidence of irreducible many-body integration. What the comparison shows is that, on the same data and against the same null model, the two constructions lead to different conclusions. Testing irreducibility requires a test that conditions on all pairwise co-occurrences (\Cref{sec:outlook}).
  \item[(ii)] \textbf{Independence of observations.} In the TNA data, consecutive windows are cut from the same sequence, so neighbouring windows share events and the observations are not independent. The $z$-scores and $p$-values against the null model should therefore be read as descriptive measures of the size of the departure, not as exact tests that assume independent observations. Tests that account for this dependence, for example by resampling whole sequences, are left to future work.
  \item[(iii)] \textbf{Dependence on the window and synthetic data.} On the TNA data the departure of the Dowker complex from the null is small and depends on the window: $z=0.9$ with $w=3$ and $3.5$ with $w=4$ for the synthetic \texttt{group\_regulation}, and $2.8$ and $0.5$ for \texttt{human\_long} (\Cref{app:w4}). Being synthetic, \texttt{group\_regulation} also need not reflect the group structure of real collaborative learning. Guidance on choosing the window and validation on real collaborative-learning data are left to future work.
  \item[(iv)] \textbf{Domain of the data.} The Stack Exchange data come from online question-and-answer communities, not from classrooms. Math Educators, a community of mathematics teachers, is closest to an educational setting, but it is still not a record of collaboration among learners. In particular, the permanent loss appeared only in the Stack Exchange data; whether it occurs in classroom data remains to be tested. We also do not relate topology to learning outcomes. Applying the approach to classroom discourse and records of collaborative learning, and examining how the resulting topology relates to learning processes and outcomes, are left to future work.
  \item[(v)] \textbf{No higher-order dependencies.} We study many-body co-occurrence (HOI) only, not memory in sequences (HOD), which \texttt{Nestimate} already covers extensively. \Cref{app:hodge} relates transition flows to the co-occurrence complex, but only through the net flow, not memory, and without data. Treating both kinds of higher order in a single framework is beyond the scope of this paper.
\end{enumerate}

\subsection{Outlook}\label{sec:outlook}
This study has shown that observed groups call for the Dowker complex rather than the clique complex. Taking this as a starting point, the work can be extended in seven directions.
Direction (1) carries the results over to existing tooling; (2)--(5) extend the analysis to irreducibility, time, heterogeneous interaction data and direction; (6) examines systematically when the two constructions diverge most; and (7) interprets the holes in terms of learning.
Of these, (2) addresses limitation (i), (3) corresponds to the ``Time-stamped groups'' row of \Cref{tab:rules}, and (5) to its ``Transitions (directed)'' row.
\begin{enumerate}
  \item[(1)] \textbf{A bridge in existing tooling.} Observed groups are already representable in \texttt{Nestimate} (\texttt{bipartite\_groups}), and \texttt{hypernets} now builds a window co-occurrence complex. Building Dowker complexes by containment counts from any observed groups, and reporting the diagnostics of this paper next to clique-based results, would let users choose the construction according to the data type within the same pipeline; concrete suggestions for the current packages are given in \Cref{app:suggest}.
  \item[(2)] \textbf{Existence and irreducibility.} The Dowker complex records whether a set of entities occurred together; whether it occurred more often than its pairs imply is a separate, statistical question. We propose a two-stage analysis: Dowker complexes for existence, followed by a test of three-way interaction that conditions on all pairwise co-occurrences, as in log-linear models without a three-way term~\citep{bishop1975discrete}; see \citet{musciotto2021detecting} for a related validation of hyperedges against a degree-preserving null. Results will be reported separately.
  \item[(3)] \textbf{Time.} We have previously segmented learning into periods from changes in the connected components ($\beta_0$) of zigzag persistent homology~\citep{inoue2026}. Cumulative and zigzag~\citep{carlsson2010zigzag} Dowker filtrations would follow not only connected components but also when groups form and when cycles close. This corresponds to the ``Time-stamped groups'' row of \Cref{tab:rules}.
  \item[(4)] \textbf{Heterogeneous interaction data.} Learner-by-behaviour data as in HINA~\citep{feng2026hina} are bipartite relations; by \Cref{prop:duality}, a Dowker analysis gives the same persistent homology from the learner side and from the behaviour side in every dimension, whereas the two clique projections agree only up to $H_1$ and both erase the kernel.
  \item[(5)] \textbf{Direction.} As shown in \Cref{sec:rules}, direction accounts for a large part of the TNA transition-probability matrices, but current pipelines discard it by symmetrization. Directed flag complexes~\citep{lutgehetmann2020computing} and path homology~\citep{grigoryan2012homologies} compute topology while keeping direction, and so offer a way to study the higher-order structure of transition data themselves. A complementary route keeps the co-occurrence complex and treats the net flow of transitions as an edge flow on it; \Cref{app:hodge} shows that phantom triangles then turn harmonic circulation into local curl, and comparing the harmonic share of the flow under the two constructions is left to future work.
  \item[(6)] \textbf{Decision rules.} \Cref{prop:conformal} says exactly when the two constructions agree; what remains is to say by how much they disagree. Which data properties (density, group-size distribution, overlap between groups, the number of entities relative to group size) control the permanent and the transient loss will be mapped in a separate study, with bounds where possible. This would let analysts estimate in advance, from the properties of their data, how much the choice of construction matters.
  \item[(7)] \textbf{Interpreting holes in learning terms.} A first-homology class of the Dowker complex is a ring of ideas, codes or participants that are connected pair by pair but were never brought together in one group. In Knowledge Building~\citep{scardamalia2006knowledge}, such a ring may mark ideas that a community has related only piecemeal and has yet to rise above; in terms of group cognition~\citep{stahl2006group}, it marks connections that exist only between pairs and not at the level of the group. Representative cycles from classroom discourse, for example data analysed with KBDeX~\citep{oshima2012kbdex}, read together with learning scientists, would test these interpretations and connect the topology to learning processes and outcomes (limitation (iv)).
\end{enumerate}

\section{Conclusion}\label{sec:conclusion}
A clique complex built from pairwise co-occurrence is a function of the pairwise network; on data that consist of observed groups it can only erase their topology, never add to it.
On public co-occurrence data the clique construction erases many holes for good and changes the conclusions drawn against a null model. On the example data of the TNA toolchain, with the tools' own output reproduced exactly, it erases no hole for good but fills every hole before the data do, so that it shows none. Neither loss is visible at the fixed thresholds used in practice, and both can be reported as diagnostics.
Whether this happens can be checked on triples of groups.
The construction should follow the data type.
For observed groups that construction is the Dowker complex, which fits into existing pipelines and fills the one piece they lack: the topology of the groups themselves.

\section*{Code and data availability}\label{sec:availability}
Code and scripts that regenerate every figure and table are available at \url{https://github.com/igu-lab/beyond-the-clique}.
The code and results for this version of the paper are archived at Zenodo: \url{https://doi.org/10.5281/zenodo.23147037} (all versions: \url{https://doi.org/10.5281/zenodo.23096088}).
The data themselves are not redistributed; they can be obtained from the sources below and converted with the scripts in the repository, and the sizes in \Cref{tab:main} can be used to check the converted data.
The Stack Exchange data are derived from the \texttt{Posts.xml} files of the Mathematics Educators, Academia and Physics sites in the public data dump of 31 December 2025~\citep{stackexchange2025dump}, licensed under CC BY-SA.
The TNA example data are distributed with the R packages \texttt{tna} 1.2.3 (\texttt{group\_regulation}) and \texttt{Nestimate} 0.8.5 (\texttt{human\_long}~\citep{saqr2026vibe}).
Preprocessing and reproduction steps are given in \Cref{app:repro}.

\section*{Acknowledgments}\label{sec:ack}
This work was supported by JSPS KAKENHI JP26K00528, JP26K22187, JP25K00845, JP25K21951.

\bibliographystyle{unsrtnat}
\bibliography{refs}

\appendix
\section{Proofs and additional results}\label{app}

\subsection{Persistence statement of \Cref{prop:erase}}\label{app:persistence}
Index the count filtration by decreasing $t$. Every edge enters both complexes at the same value $w(e)$, so a first-homology class, which is born when an edge closes a cycle in the common 1-skeleton, is born at the same value in both. A class dies when the cycle becomes a boundary. Since $B_1(\KD(t))\subseteq B_1(\KC(t))$ for every $t$, a cycle that bounds in $\KD(t)$ bounds in $\KC(t)$, so its death in $\KC$ is no later than in $\KD$. Classes that never die in $\KD$ but die in $\KC$ are exactly the elements of $\ker\iota_*$ at the end of the filtration. We pair classes by the positive edge that creates them in the common 1-skeleton, using the same column reduction for both complexes.

\subsection{\Cref{prop:duality}: sketch and example}\label{app:duality}
Let $R\subseteq X\times Y$. The Dowker complex $D_X$ has as simplices the finite $\sigma\subseteq X$ with a common $y$ such that $(x,y)\in R$ for all $x\in\sigma$; $D_Y$ is defined symmetrically. Dowker's theorem~\citep{dowker1952} states $D_X\simeq D_Y$; \citet{chowdhury2018dowker} prove a functorial version that yields equal persistence diagrams for filtered relations. The clique complexes $C_X$ and $C_Y$ of the one-mode projections are the flagifications of $D_X$ and $D_Y$. \citet{huber2025flagifying} prove that their persistent homology agrees in dimensions 0 and 1 and show by the example below that it can differ in dimension 2.

\emph{Example} (the counterexample of \citet[Prop.~3.4]{huber2025flagifying}, stated as a relation). Let $Y=\{1,2,3,4\}$ be four learners and $X$ the six two-person conversations $\{1,2\},\dots,\{3,4\}$, with $(x,y)\in R$ when learner $y$ took part in conversation $x$. $D_Y$ is the complete graph $K_4$ without triangles, so $\beta_1(D_Y)=3$, while $C_Y$ is the full tetrahedron and is contractible. $D_X$ has four triangles (the three conversations of each learner) on the octahedral 1-skeleton (two conversations are adjacent when they share a learner), so $\beta_1(D_X)=12-6+1-4=3$, in agreement with $D_Y$. Its clique complex $C_X$ fills all eight triangles of the octahedron and is a 2-sphere, $\beta_2(C_X)=1$. Hence $C_X\not\simeq C_Y$, while $D_X\simeq D_Y$. A random search over relations with up to six elements on each side found no smaller example.

\subsection{Windows of length four}\label{app:w4}
\begin{table}[h]
\centering
\caption{TNA example data with windows $w=4$ (21{,}867 and 9{,}530 windows). Columns as in \Cref{tab:main,tab:null}.}
\label{tab:w4}
\small
\begin{tabular}{@{}lrrrrrr@{}}
\toprule
Data & phantom rate (null) & max $\beta_1$ C / D & null D & $z$ (D) & kernel & earlier \\
\midrule
group\_regulation ($w{=}4$) & 0.00 (0.00) & 0 / 8 & 6.3 & 3.5 & 0 & 28 \\
human\_long ($w{=}4$) & 0.00 (0.00) & 0 / 17 & 16.8 & 0.5 & 0 & 28 \\
\bottomrule
\end{tabular}
\end{table}
With $w=4$ the qualitative picture is unchanged: no triangle is phantom, the clique complex has $\beta_1=0$ throughout and fills all 28 classes earlier, and neither data set departs strongly from the null (\texttt{group\_regulation} $z=3.5$, \texttt{human\_long} $z=0.5$, both towards more cycles than random).

\subsection{Preprocessing and reproduction}\label{app:repro}
Stack Exchange data sets were converted from the \texttt{Posts.xml} files of the dump into the nverts/simplices/times format of \citet{benson2018simplicial}, with no cap on group size. TNA sequences were exported from R in long format (actor, time, action) and converted to windows within each sequence; a sequence shorter than $w$ forms a single group. The clique construction of \texttt{Nestimate::build\_simplicial} was re-implemented and checked against the package output for both data sets at thresholds 0.02, 0.05, 0.10 and 0.20: the simplex sets are identical. The repository layout and the commands that regenerate every figure and table are documented in the README of the repository (see Code and data availability).

\section{Relation to the TNA toolchain (as of 2 October 2026)}\label{app:tools}
This appendix relates our results to the R toolchain around TNA (\texttt{tna}, \texttt{Nestimate}, \texttt{hypernets}), whose higher-order functions are developing rapidly. It records the state of the packages on 2 October 2026; the main text does not depend on it.

\subsection{The window complex of \texttt{hypernets}}\label{app:hypernets}
Since 1 October 2026 the development version of \texttt{hypernets}~\citep{hypernets060}, which builds hypergraph methods on top of \texttt{Nestimate}, has offered \texttt{simplicial(type = "window")}, a co-occurrence complex in the sense of \citet{salnikov2018cooccurrence} whose groups are the sets of distinct actions in windows of consecutive actions---the same kind of groups as in our analysis of the TNA data (\Cref{sec:data}).
Sets can also be validated against a swap null model that, like ours, preserves group sizes and entity degrees, or against the hypergeometric null of \citet{musciotto2021detecting}.
With \texttt{min\_count = 1} this complex is the Dowker complex $\KD(1)$ of the windows.
Its count filtration, however, counts how often each \emph{identical} window set recurs, and a simplex enters at the largest such count among the window sets that contain it, whereas the Dowker count $w(\sigma)$ counts every window that contains $\sigma$.
The two filtrations therefore differ from $t=2$ on: for the windows $\{a,b,c\}$ and $\{a,b,d\}$, the edge $\{a,b\}$ has $w=2$ and lies in $\KD(2)$, but neither window set recurs, so in the \texttt{hypernets} filtration the edge enters only at count~1. We confirmed this with version 0.6.0 (commit cd4ddcf) on these two windows: with \texttt{min\_count = 1} the edge $\{a,b\}$ has count~1, and with \texttt{min\_count = 2} only the four vertices remain, since vertices are kept at every count.
In general the \texttt{hypernets} complex at count $t$ is a subcomplex of $\KD(t)$ whose 1-skeleton can be smaller than the thresholded co-occurrence network $G(t)$, so it cannot be compared with the clique complex of $G(t)$ on a common 1-skeleton.
The package does not compare its window complex with the clique construction, and it is built from the action side only, whereas the dual window side is equally available (\Cref{prop:duality}).

\subsection{Transitions as flows on the co-occurrence complex}\label{app:hodge}
TNA models transitions rather than co-occurrences, and the main text treats co-occurrence only (\Cref{sec:rules}). The loss of \Cref{prop:erase} nevertheless bears on transition data, through the Hodge decomposition of edge flows~\citep{jiang2011statistical,lim2020hodge,schaub2020random}.

Orient the edges of $G(t)$ and give the space $C^1$ of real edge functions the standard inner product. A transition matrix $W$ on the same entities defines the net flow $f(u,v)=W_{uv}-W_{vu}$, the antisymmetric part measured in \Cref{sec:rules}. For a complex $K$ with 1-skeleton $G(t)$, $C^1=\mathrm{grad}\oplus\mathrm{curl}^K\oplus\mathcal{H}^K$ orthogonally, where $\mathrm{grad}$ depends only on $G(t)$, $\mathrm{curl}^K$ is spanned by the boundaries of the triangles of $K$, and $\dim\mathcal{H}^K=\beta_1(K)$~\citep{lim2020hodge}; in HodgeRank terms~\citep{jiang2011statistical}, these are a global ordering of the entities, circulation around triangles, and circulation around holes.

\begin{corollary}[of \Cref{prop:erase}: phantom triangles turn harmonic flow into curl]\label{cor:hodge}
Let $\mathcal{L}=\mathcal{H}^{\mathrm{D}}\cap\mathrm{curl}^{\mathrm{C}}$. Then $\mathcal{H}^{\mathrm{D}}=\mathcal{H}^{\mathrm{C}}\oplus\mathcal{L}$, $\dim\mathcal{L}=\beta_1^{\mathrm{D}}(t)-\beta_1^{\mathrm{C}}(t)=\dim\ker\iota_*$, and $\mathcal{L}$ is spanned by the harmonic parts, in $\KD(t)$, of the boundaries of the phantom triangles. Hence, for every flow $f$, the clique complex moves exactly the component $P_{\mathcal{L}}f$ from the harmonic to the curl part, and the harmonic parts satisfy $\|h^{\mathrm{C}}(f)\|\le\|h^{\mathrm{D}}(f)\|$.
\end{corollary}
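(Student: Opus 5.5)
The plan is to work entirely inside the real cochain space $C^1$ of $G(t)$, which the two complexes share by \Cref{prop:erase}(1). Write $\delta_0:C^0\to C^1$ for the coboundary, so that $\mathrm{grad}=\operatorname{im}\delta_0$. Write $\partial^K$ for the map sending triangle functions of $K$ to edge functions, so that $\mathrm{curl}^K=\operatorname{im}\partial^K$. Then the harmonic space is $\mathcal{H}^K=(\mathrm{grad}\oplus\mathrm{curl}^K)^\perp=\ker\delta_0^{\top}\cap(\mathrm{curl}^K)^\perp$.

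Since $\KD(t)\subseteq\KC(t)$ with the same edges, we have $\mathrm{curl}^{\mathrm{D}}\subseteq\mathrm{curl}^{\mathrm{C}}$, and hence $\mathcal{H}^{\mathrm{C}}\subseteq\mathcal{H}^{\mathrm{D}}$. The cycle space $Z:=\ker\delta_0^\top$ decomposes orthogonally in two ways:
\[
Z=\mathrm{curl}^{\mathrm{D}}\oplus\mathcal{H}^{\mathrm{D}}=\mathrm{curl}^{\mathrm{C}}\oplus\mathcal{H}^{\mathrm{C}} .
\]
Let $\mathcal{L}$ be the orthogonal complement of $\mathcal{H}^{\mathrm{C}}$ inside $\mathcal{H}^{\mathrm{D}}$. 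I will show that $\mathcal{L}=\mathcal{H}^{\mathrm{D}}\cap\mathrm{curl}^{\mathrm{C}}$.

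\emph{Inclusion $\supseteq$.} Every element of $\mathrm{curl}^{\mathrm{C}}$ is orthogonal to $\mathcal{H}^{\mathrm{C}}$.

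\emph{Inclusion $\subseteq$.} Take $x\in\mathcal{H}^{\mathrm{D}}$ with $x\perp\mathcal{H}^{\mathrm{C}}$. Then $x\in Z\cap(\mathcal{H}^{\mathrm{C}})^\perp=\mathrm{curl}^{\mathrm{C}}$.

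This gives $\mathcal{H}^{\mathrm{D}}=\mathcal{H}^{\mathrm{C}}\oplus\mathcal{L}$, and therefore $\dim\mathcal{L}=\dim\mathcal{H}^{\mathrm{D}}-\dim\mathcal{H}^{\mathrm{C}}=\beta_1^{\mathrm{D}}-\beta_1^{\mathrm{C}}$. The equality $\dim\mathcal{H}^K=\beta_1(K)$ over $\mathbb{R}$ is the cited Hodge fact. The identification of $\beta_1^{\mathrm{D}}-\beta_1^{\mathrm{C}}$ with $\dim\ker\iota_*$ is \Cref{prop:erase}(2),(3), which holds over any field.

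\emph{Spanning set for $\mathcal{L}$.} Let $P$ be the orthogonal projection onto $\mathcal{H}^{\mathrm{D}}$. Then:
\begin{itemize}
\item $\mathrm{curl}^{\mathrm{C}}$ is spanned by $\mathrm{curl}^{\mathrm{D}}$ together with the vectors $\partial\tau$ for phantom triangles $\tau$.
\item $P$ kills $\mathrm{curl}^{\mathrm{D}}$.
\item $P$ maps $\mathrm{curl}^{\mathrm{C}}\subseteq Z$ into $\mathcal{H}^{\mathrm{D}}$. For $c\in\mathrm{curl}^{\mathrm{C}}$ we have $c-Pc\in\mathrm{curl}^{\mathrm{D}}\subseteq\mathrm{curl}^{\mathrm{C}}$, so $Pc\in\mathrm{curl}^{\mathrm{C}}$ and thus $Pc\in\mathcal{L}$.
\item Conversely, $P$ restricts to the identity on $\mathcal{L}\subseteq\mathrm{curl}^{\mathrm{C}}$.
\end{itemize}
Hence $\mathcal{L}=P(\mathrm{curl}^{\mathrm{C}})=\operatorname{span}\{P\partial\tau:\tau\text{ phantom}\}$, which is the claimed spanning by harmonic parts in $\KD(t)$.

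\emph{Flow statement.} By the decomposition above, $P_{\mathcal{H}^{\mathrm{D}}}=P_{\mathcal{H}^{\mathrm{C}}}+P_{\mathcal{L}}$. The gradient part is the same in both complexes, because $\mathrm{grad}$ depends only on $G(t)$. So $h^{\mathrm{D}}(f)=h^{\mathrm{C}}(f)+P_{\mathcal{L}}f$, and the curl parts satisfy $c^{\mathrm{C}}=c^{\mathrm{D}}+P_{\mathcal{L}}f$, since $\mathrm{curl}^{\mathrm{C}}=\mathrm{curl}^{\mathrm{D}}\oplus\mathcal{L}$ orthogonally. That last identity is $Z$ decomposed twice, together with the orthogonality of $\mathcal{L}$ to $\mathrm{curl}^{\mathrm{D}}$, which holds because $\mathcal{L}\subseteq\mathcal{H}^{\mathrm{D}}$. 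Pythagoras then gives $\|h^{\mathrm{C}}\|\le\|h^{\mathrm{D}}\|$.

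The main obstacle is bookkeeping rather than depth. The key point is to verify that $c-Pc\in\mathrm{curl}^{\mathrm{D}}$ for $c\in Z$. This holds because the $\mathrm{grad}$ component of a cycle vanishes, so $Z=\mathrm{curl}^{\mathrm{D}}\oplus\mathcal{H}^{\mathrm{D}}$ exactly.
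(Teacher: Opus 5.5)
Your proof is correct and is essentially the paper's argument: both rest on the common $\mathrm{grad}$, the inclusion $\mathrm{curl}^{\mathrm{D}}\subseteq\mathrm{curl}^{\mathrm{C}}$, and the resulting orthogonal splitting $C^1=\mathrm{grad}\oplus\mathrm{curl}^{\mathrm{D}}\oplus\mathcal{L}\oplus\mathcal{H}^{\mathrm{C}}$. The only difference is the starting point: the paper obtains $\mathcal{L}$ as the orthogonal complement of $\mathrm{curl}^{\mathrm{D}}$ inside $\mathrm{curl}^{\mathrm{C}}$, whereas you obtain it as the complement of $\mathcal{H}^{\mathrm{C}}$ inside $\mathcal{H}^{\mathrm{D}}$. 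Your spanning argument, which projects onto $\mathcal{H}^{\mathrm{D}}$, also makes explicit a step the paper leaves implicit: on $\mathrm{curl}^{\mathrm{C}}$, projecting onto $\mathcal{L}$ is the same as taking the $\KD(t)$-harmonic part.
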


\begin{proof}
Since $\KD(t)\subseteq\KC(t)$ with a common 1-skeleton, $\mathrm{grad}$ is common and $\mathrm{curl}^{\mathrm{D}}\subseteq\mathrm{curl}^{\mathrm{C}}$. The orthogonal complement of $\mathrm{curl}^{\mathrm{D}}$ in $\mathrm{curl}^{\mathrm{C}}$ is orthogonal to $\mathrm{grad}\oplus\mathrm{curl}^{\mathrm{D}}$, so it lies in $\mathcal{H}^{\mathrm{D}}$ and equals $\mathcal{L}$; thus $C^1=\mathrm{grad}\oplus\mathrm{curl}^{\mathrm{D}}\oplus\mathcal{L}\oplus\mathcal{H}^{\mathrm{C}}$, with $\mathcal{L}$ counted in $\mathcal{H}^{\mathrm{D}}$ for $\KD(t)$ and in $\mathrm{curl}^{\mathrm{C}}$ for $\KC(t)$. This gives the splitting and the last statement, and $\dim\mathcal{L}=\beta_1^{\mathrm{D}}-\beta_1^{\mathrm{C}}=\dim\ker\iota_*$ by \Cref{prop:erase}. Finally $\mathrm{curl}^{\mathrm{C}}=\mathrm{curl}^{\mathrm{D}}+\operatorname{span}\{\partial\tau:\tau\text{ phantom}\}$, and projecting onto $\mathcal{L}$ gives the spanning set.
\end{proof}

In words: circulation of actions around a cycle that the learners never closed as a group---for example $a\to b\to c\to a$ when $\{a,b,c\}$ is a phantom triangle whose boundary no observed groups fill---has a harmonic component in the Dowker complex, reflecting a hole in the observed co-occurrence; the clique complex reports this component as local curl, as if the circulation took place inside a group.
The corollary uses only the net flow, not the Markov structure of TNA; it is a conceptual link between the two meanings of ``higher order'' (\Cref{sec:intro}), and its empirical use is left to future work (\Cref{sec:outlook}).

\subsection{Suggestions for existing tooling}\label{app:suggest}
The development version 0.9.12 of \texttt{Nestimate}~\citep{nestimate0912} offers the same three simplicial types as version 0.8.5 (\Cref{sec:rules}).
The window complex of \texttt{hypernets}~\citep{hypernets060} is a first step: at \texttt{min\_count = 1} it is the Dowker complex of the windows. Three additions would make it a full Dowker construction: a count filtration by containment, $w(\sigma)$, so that it shares its 1-skeleton with the clique complex; groups other than windows, such as threads, sessions and the \texttt{bipartite\_groups} hypergraphs of \texttt{Nestimate}; and the diagnostics of this paper (phantom rate, classes filled earlier, kernel dimension and the check of \Cref{prop:conformal}) reported next to clique-based results. Users could then choose the construction according to the data type and see its consequences within the same pipeline.

\end{document}